	\DeclarePairedDelimiter\norm{\|}{\|}
	\providecommand\given{}
	\newcommand\SetSymbol[1][]{\nonscript\;#1\vert\nonscript\;
	\mathopen{}\allowbreak}
	\DeclarePairedDelimiterX\set[1]\{\}{%
	\renewcommand\given{\SetSymbol[\delimsize]}
	#1
	}
	\DeclareMathAlphabet{\mathcal}{OMS}{cmsy}{m}{n}
	\definecolor{mblue}{RGB}{0,0,205}
	\tikzstyle{frame} = [draw, -latex]
	\tikzstyle{line} = [draw, line width=0.5mm]
	\tikzstyle{line2} = [draw, dashdotted]
	\tikzstyle{line3} = [draw, dashed, line width=0.5mm]
	\tikzstyle{place} = [circle, draw=black, fill=black, thick, inner sep=2pt, minimum size=1mm]
	\tikzstyle{placeRed} = [circle, draw=red, fill=red, thick, inner sep=2pt, minimum size=1mm]
	\tikzstyle{vertex} = [circle, draw=black, fill=black, thick, inner sep=2pt, minimum size=1mm]
	\acrodef{gir}[GIR]{globally and infinitesimally rigid}
	\acrodef{mgir}[MGIR]{minimally globally and infinitesimally rigid}
	\acrodef{mir}[MIR]{minimally and infinitesimally rigid}
	\acrodef{ir}[IR]{infinitesimally rigid}
\newtheorem{lemma}{\bfseries Lemma}
\newtheorem{remark}{\bfseries Remark}
\newtheorem{theorem}{\bfseries Theorem}
\newcommand{\myemph}{\emph}
\newcommand{\myfig}{Fig.~}
\title{\LARGE \bf
Distance-based Control of K4 Formation with Almost Global Convergence
}
\author{
Myoung-Chul Park\authorrefmark{2},
Zhiyong Sun\authorrefmark{3},
Minh Hoang Trinh\authorrefmark{2},
Brian D. O. Anderson\authorrefmark{3},
and Hyo-Sung Ahn\authorrefmark{2}
\thanks{${}^{\dag}$School of Mechatronics, Gwangju Institute of Science and Technology (GIST), 123 Cheomdan-gwagiro, Buk-gu, Gwangju, 61005 Republic of Korea. E-mail: {\tt\small\{mcpark,trinhhoangminh,hyosung\}@gist.ac.kr}}
\thanks{${}^{\ddag}$Zhiyong Sun is with Shandong Computer Science Center (SCSC), Jinan, China; Brian D. O. Anderson was a visiting expert with SCSC. Zhiyong Sun and Brian D. O. Anderson are with National ICT Australia and Research School of Engineering, The Australian National University, Canberra ACT 0200, Australia. {\tt\small\{zhiyong.sun,brian.anderson\}@anu.edu.au}}
}
\begin{document}
\maketitle
\thispagestyle{empty}
\pagestyle{empty}


\begin{abstract}
In this paper, we propose a distance-based formation control strategy that can enable four mobile agents, which are modelled by a group of single-integrators, to achieve the desired formation shape specified by using six consistent inter-agent distances in a 2-dimensional space. The control law is closely related to a gradient-based control law formed from a potential function reflecting the error between the actual inter-agent distances and the desired inter-agent distances. There are already control strategies achieving the same objective in a distance-based control manner in the literature, but the results do not yet include a global as opposed to local stability analysis. We propose a control strategy modified from the existing gradient-based control law so that we can achieve almost global convergence to the desired formation shape, and the control law uses known properties for an associated formation shape control problem involving a four-agent tetrahedron formation in 3-dimensional space. Simulation results verifying our analysis are also presented.
\end{abstract}

\section{Introduction}
Formation control of mobile agents has attracted a lot of attention in the field of multi-agent systems recently, and many approaches to handle the control problem of multi-agent formation have been proposed.
For example, a formation control strategy using relative position constraints for single-integrator modeled agents can formulate the formation control problem as a first-order consensus problem; then, under certain graph structures\footnote{For instance, having a rooted directed spanning tree for time-invariant graph cases and being uniformly connected for time-varying graph cases.}, global asymptotic stability of desired formation shapes can be guaranteed by virtue of the convergence results of the consensus algorithms \cite{C:Ren:CSM2007, C:OlfatiSaber:IEEE2007}.
Meanwhile, some other control strategies using inter-agent distance constraints are studied in the literature \cite{C:OlfatiSaber:IFAC2002, C:Krick:IJC2009, C:Yu:SIAMJCO2009, C:Oh:Automatica2011, C:Summers:TAC2011}.
Since desired formations are characterized by relative position vectors of the agents in displacement-based control, the agents need to be equipped with aligned local reference frames, but in distance-based control, the agents can use fully independent local reference frames.
Unlike displacement-based control, many publications on distance-based control of multi-agent formations deal with local stability analysis only with a few exceptions (e.g., \cite{C:Anderson:IFAC2010, C:Dasgupta:AUCC2011, C:Park:CDC2014, C:Sun:CDC2015}) handling global stability issues.

One of the distance-based formation control algorithms is a gradient-based control law formed from a potential function reflecting the error between the actual inter-agent distances and the desired inter-agent distances, where the local asymptotic stability under the control law is explored in \cite{C:Krick:IJC2009} and \cite{C:Oh:IJRNC2014}.
With the same control law used in \cite{C:Krick:IJC2009}, the results in \cite{C:Anderson:IFAC2010} and \cite{C:Dasgupta:AUCC2011} show that, for any four-agent formation that is represented by the four-vertex complete graph $\mathcal{K}_{4}$ in the two-dimensional space and thus termed a $\mathcal{K}_{4}$ formation, any \myemph{rectangular} incorrect equilibrium formation\footnote{An incorrect equilibrium formation is one where the potential function has a critical point not corresponding to the correct formation shape.} is unstable.
In the case of a general $\mathcal{K}_{4}$ formation in the three-dimensional space, i.e., the tetrahedral formation, it is shown that any incorrect equilibrium formation is unstable \cite{C:Park:CDC2014}.
Furthermore, \cite{C:Sun:CDC2015} shows more generalized results such that, for the case of formations in an arbitrarily finite dimensional space with a \myemph{realizable} formation shape, any \myemph{degenerate}\footnote{By degenerate formation is meant one for which the space spanned by the vertex positions is of lesser dimension than that of the space in which the formation is supposed to exist.} incorrect equilibrium formation is unstable.

Nevertheless, it is still not known whether the gradient control law may have a locally attractive incorrect equilibrium formation which could be nondegenerate for general cases, e.g., for a general rather than rectangular $\mathcal{K}_{4}$ formation in the two-dimensional space.
However, numerical simulations show that we can observe the existence of an attractive incorrect equilibrium formation for general cases under the gradient control law.
For instance, \myfig\ref{Fig:5AMIRCorrectTrj} shows convergence to a desired formation shape in the two-dimensional space with 5-agent system under the gradient control law proposed in \cite{C:Krick:IJC2009}, but \myfig\ref{Fig:5AMIRIncorrectTrj} shows convergence to an incorrect nondegenerate formation shape from another initial condition under the same control law.
\begin{figure}[!t]
\centering
\subfigure[Convergence to a desired formation shape.]{
\includegraphics[width=0.48\textwidth]{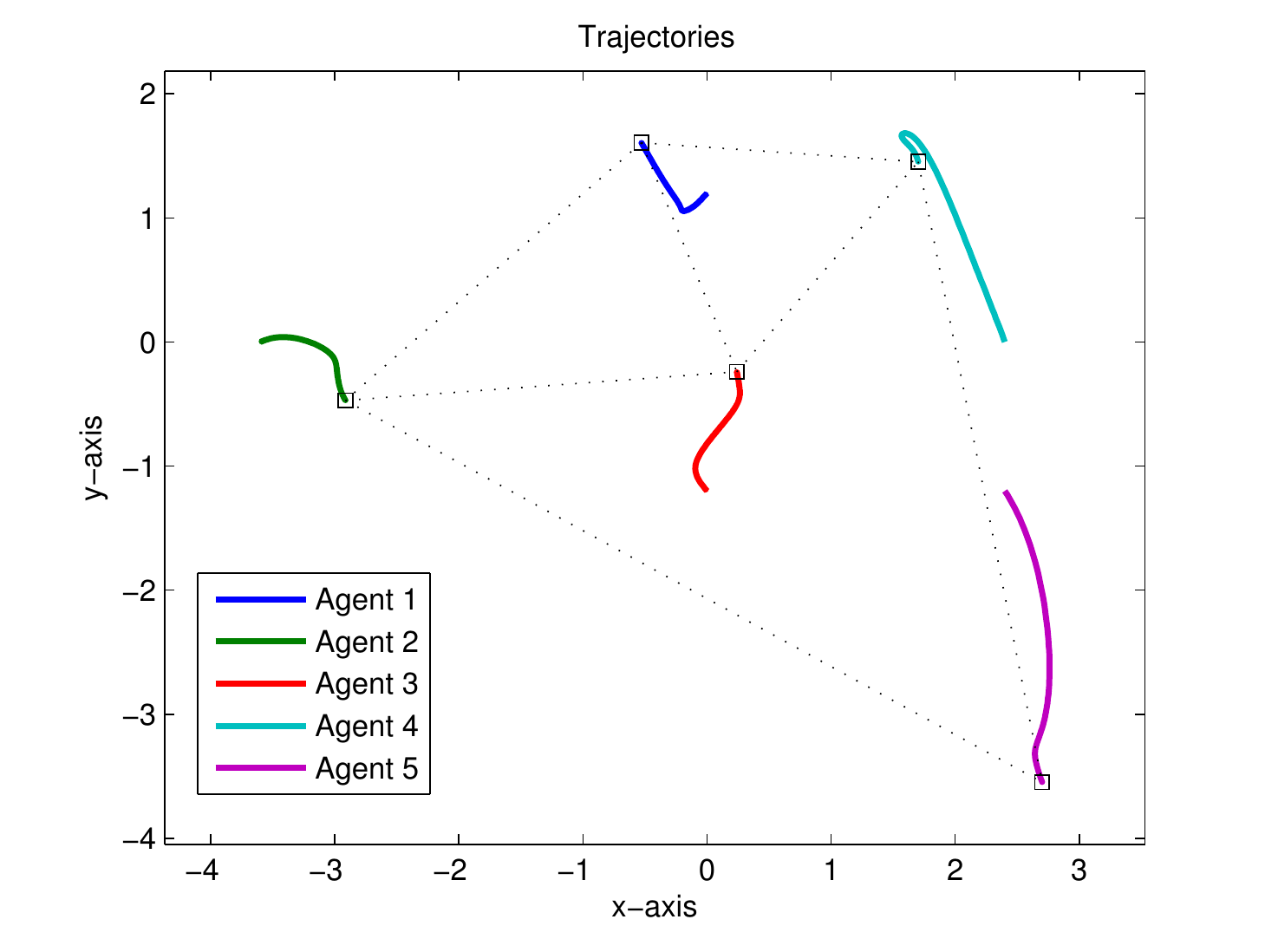}
\label{Fig:5AMIRCorrectTrj}
}
\subfigure[Convergence to an incorrect equilibrium formation.]{
\includegraphics[width=0.48\textwidth]{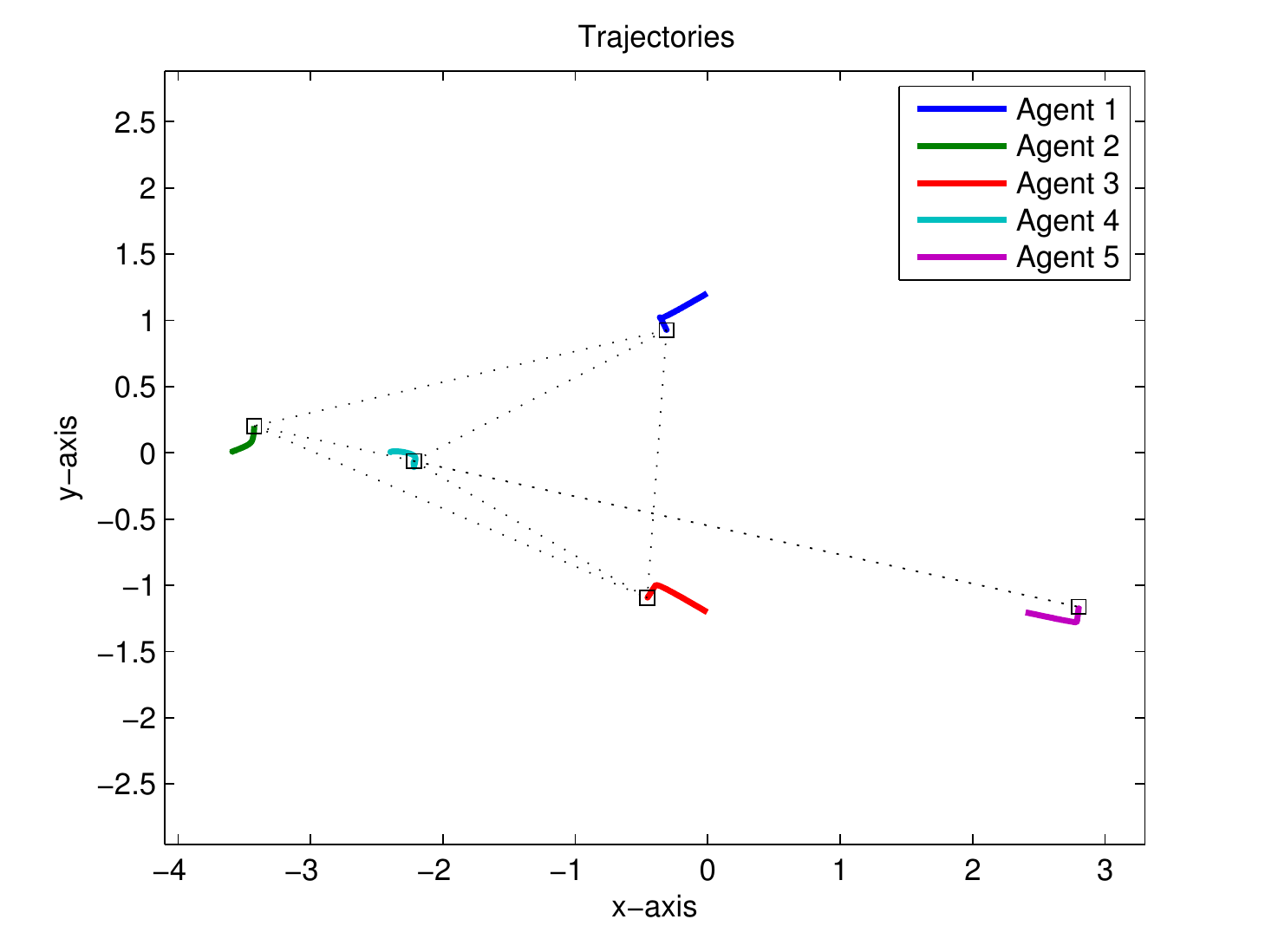}
\label{Fig:5AMIRIncorrectTrj}
}
\caption{The agents are governed by the control law proposed in \cite{C:Krick:IJC2009}. The dotted lines represent the edges, and the square marks denote the final locations of the agents. The squared inter-agent distances for the desired formation shape are 10, 4, 5, 10, 41, 5, and 26 for edges $\set{1,2}$, $\set{1,3}$, $\set{1,4}$, $\set{2,3}$, $\set{2,5}$, $\set{3,4}$, and $\set{4,5}$, respectively.}
\label{Fig:5AMIR}
\end{figure}

The example formation used in \myfig\ref{Fig:5AMIR} is a general formation in two-dimensional space not largely different to a $\mathcal{K}_{4}$ formation with respect to the number of agents and the number of edges, and more agents and edges again may result in a larger number of attractive incorrect equilibrium formations associated with a single correct equilibrium.
On the other hand a $\mathcal{K}_{3}$ formation, i.e., triangular formation, in two-dimensional space has only degenerate incorrect equilibrium formations, which are unstable from the consequence of \cite{C:Sun:CDC2015}.
Consequently a key remaining question in relation is to establish whether or not there is an attractive incorrect equilibrium formation for the $\mathcal{K}_{4}$ formation beyond the results in \cite{C:Anderson:IFAC2010, C:Dasgupta:AUCC2011, C:Summers:MSC2013}.

In this paper, rather than dealing with this question we propose a modified control law to achieve desired formation shapes for $\mathcal{K}_{4}$ formation in the two-dimensional space.
The proposed control law makes the $\mathcal{K}_{4}$ formation in the two-dimensional space mimic the tetrahedral formation evolving in the three-dimensional space, thereby we can use the characteristics of $\mathcal{K}_{4}$ formation in the three-dimensional space observed in \cite{C:Park:CDC2014} and \cite{C:Sun:CDC2015} so that we can guarantee almost global convergence to desired formation shapes.

The rest of the paper is organized as follows.
In Section~\ref{Sec:preliminary}, we provide background knowledge on $\mathcal{K}_{4}$ formations in two- and three-dimensional spaces under the conventional gradient control law.
We propose a modified control law in Section~\ref{SSec:K4FormationInReduced3D} with stability analysis.
Simulation results with the proposed control law are shown in Section~\ref{Sec:simulation}, and we summarize the paper in Section~\ref{Sec:Conclusion}.

\section{Background knowledge}
\label{Sec:preliminary}
In the rest of the paper, we are going to use the following
notation:
\begin{itemize}
\item $\mathbb{R}^{n}$: $n$-dimensional Euclidean space
\item $\norm{x}$: the Euclidean norm of a vector $x$
\end{itemize}

Consider mobile agents that can move on a flat surface so that their locations can be represented by 2-vectors.
We assume that any two neighboring agents can measure the relative position between them, but they do not know their absolute positions.
The agents do not need to have the same coordinate frame.
Thus, each agent can measure the relative positions of its neighbors in its own local coordinate frame.
Under such a description, the agents and the pairs of agents which are neighbors can be represented by a graph (agents corresponding to vertices and neighbor pairs corresponding to edges) which is \myemph{realized} in $\mathbb{R}^{2}$ as a formation.
Thus we introduce relevant notation and terminologies in the following subsections.

\subsection{Formation graph}
Let $\mathcal{G}$ denote a graph defined by $\mathcal{G} = (\mathcal{V},\mathcal{E})$, where $\mathcal{V} = \set{1,\ldots,N}$ is the set of all vertices representing the agents, and $\mathcal{E} = \set{\ldots,\set{i,j},\ldots}$ is the set of all edges representing some pairs of the vertices.
Let $p_i \in \mathbb{R}^{n}$ denote the position vector of vertex~$i$.
We call $p = [p_1^{\top}~~\ldots~~p_{N}^{\top}]^{\top} \in \mathbb{R}^{Nn}$ a \myemph{realization} of $\mathcal{G}$ in $\mathbb{R}^{n}$.
A \myemph{framework} is defined by a pair of graph and its realization, which is denoted by $(\mathcal{G},p)$.
Two realizations $p$ and $p'$ are said to be \myemph{congruent} if $\norm{p_i - p_j} = \norm{p_i' - p_j'}$ for all $i,j \in \mathcal{V}$.

\subsection{$\mathcal{K}_{4}$ formation in $\mathbb{R}^{2}$}
\label{SSec:K4FormationIn2D}
Consider four mobile agents moving on a plane, where the corresponding formation graph is given by $\mathcal{G} = (\mathcal{V},\mathcal{E})$ with $\mathcal{V} = \set{1,2,3,4}$ and $\mathcal{E} = \set*{\set{i,j} \given 1 \le i < j \le 4}$.
Let $p_i = [p_{ix}~~p_{iy}]^{\top} \in \mathbb{R}^{2}$ denote the position vector of agent~$i$ for all $i \in \mathcal{V}$.
We call $(\mathcal{G},p)$ a $\mathcal{K}_{4}$ formation in $\mathbb{R}^{2}$ because the underlying formation graph is the complete graph with four vertices, and the graph is supposed to be realized in $\mathbb{R}^{2}$.

We assume that the agents are governed by
\begin{align}
\dot{p}_i = u_i, \ \forall i \in \mathcal{V},
\end{align}
where $u_i = [u_{ix}~~u_{iy}]^{\top} \in \mathbb{R}^{2}$ is the control input for agent~$i$.
Achieving the desired formation shape by distance-based control means achieving the desired inter-agent distances so that
\begin{align}
\norm{p_i - p_j} \to d_{ij}, \ \forall \set{i,j} \in \mathcal{E}.
\end{align}
Also, normally, we need to guarantee that the agents converge to some locations for which the desired inter-agent distances are satisfied.
We assume that the distances $d_{ij} \in \mathbb{R}_{>0}$, $\set{i,j} \in \mathcal{E}$, are realizable in $\mathbb{R}^{2}$, which means that there exists a realization $\bar{p} = [p_1^{\top}~~\ldots~~p_4^{\top}]^{\top} \in \mathbb{R}^{8}$ such that $\norm{\bar{p}_i - \bar{p}_j} = d_{ij}$ for all $\set{i,j} \in \mathcal{E}$.

Note that the realizability of a nondegenerate $\mathcal{K}_{4}$ formation in $\mathbb{R}^{2}$ with given six distances can be checked by some triangle inequalities and Cayley-Menger determinant $\det C$, where
\begin{align}
C =
\begin{bmatrix}
0 &d_{12}^{2} &d_{13}^{2} &d_{14}^{2} &1\\
d_{12}^{2} &0 &d_{23}^{2} &d_{24}^{2} &1\\
d_{13}^{2} &d_{23}^{2} &0 &d_{34}^{2} &1\\
d_{14}^{2} &d_{24}^{2} &d_{34}^{2} &0 &1\\
1 &1 &1 &1 &0
\end{bmatrix}.
\end{align}
Thus, each of the triples $(d_{12},d_{13},d_{23})$, $(d_{12},d_{14},d_{24})$, $(d_{13},d_{14},d_{34})$, and $(d_{23},d_{24},d_{34})$ has to satisfy the triangle inequality, and $\det C$ must be zero for the realizability of the $\mathcal{K}_{4}$ formation in $\mathbb{R}^{2}$ \cite{C:Wirth:EM2009}.
If $\det C > 0$, provided that the triangle inequalities are satisfied, then the $\mathcal{K}_{4}$ formation is realizable in $\mathbb{R}^{3}$.

Let $v \colon \mathbb{R}^{8} \to \mathbb{R}_{\ge 0}$ be a potential function defined by
\begin{align}
v(p) = \frac{1}{4}\sum_{\set{i,j} \in \mathcal{E}}\left(\norm{p_i - p_j}^2 - d_{ij}^2\right)^2,\label{EQ:potentialFnK4inR2}
\end{align}
and consider a control law given by
\begin{align}
\label{EQ:gradControl2D}
\dot{p} = u = -\left[\frac{\partial v}{\partial p}\right]^{\top},
\end{align}
where $u = [u_1^{\top}~~\ldots~~u_4^{\top}]^{\top} \in \mathbb{R}^{8}$.
Then, we can consider two equilibrium sets of which the union is the entire equilibrium set.
One is the desired correct equilibrium set $\mathcal{P}_{\mathrm{d}}$ defined by
\begin{align}
\mathcal{P}_{\mathrm{d}} = \set*{p \in \mathbb{R}^{8} \given v(p) = 0},
\end{align}
and the other is the incorrect equilibrium set $\mathcal{P}_{\mathrm{i}}$ given by
\begin{align}
\mathcal{P}_{\mathrm{i}} = \set*{p \in \mathbb{R}^{8} \given \frac{\partial v}{\partial p} = 0,\ v(p) \ne 0}.
\end{align}
Thus, $\mathcal{P}_{\mathrm{d}}$ and $\mathcal{P}_{\mathrm{i}}$ partition the equilibrium set $\mathcal{P}_{\mathrm{eq}}$ defined by
\begin{align}
\mathcal{P}_{\mathrm{eq}} = \set*{p \in \mathbb{R}^{8} \given \frac{\partial v}{\partial p} = 0}.
\end{align}
The $\mathcal{K}_{4}$ formation in $\mathbb{R}^{2}$ is a particular example of a general \myemph{rigid} formation in $\mathbb{R}^{2}$.
For those general formations, it is shown in \cite{C:Krick:IJC2009} that the negative gradient of a potential function (which can be extended from \eqref{EQ:potentialFnK4inR2} for more agents and edges) guarantees, under a single-integrator model, local asymptotic stability of the desired formation shape if the desired formation shape is infinitesimally rigid, and it is further revealed that such an infinitesimal rigidity condition can be relaxed to rigidity condition \cite{C:Oh:IJRNC2014}.
Thus, we can conclude that $\mathcal{P}_{\mathrm{d}}$ is asymptotically stable under \eqref{EQ:gradControl2D}.

On the other hand, an example of a $\mathcal{K}_{4}$ formation is explored in \cite{C:Krick:IJC2009} where the desired distances are given by
\begin{align}
d_{12} = 1,\ d_{13} = \sqrt{5},\ d_{14} = 2,\ d_{23} = 2,\ d_{24} = \sqrt{5},\ d_{34} = 1,
\end{align}
which define a rectangular formation.
The paper creates the impression that there exist incorrect equilibria which may be attractive under \eqref{EQ:gradControl2D}, but it is shown in \cite{C:Anderson:IFAC2010} and \cite{C:Dasgupta:AUCC2011} that actually this cannot be the case, and the equilibrium in question is a saddle point.
Furthermore, \cite{C:Anderson:IFAC2010} and \cite{C:Dasgupta:AUCC2011} show that, for $\mathcal{K}_{4}$ formations in $\mathbb{R}^{2}$, any incorrect equilibrium formation associated with a rectangular desired formation is unstable (i.e. is a saddle point or is completely unstable) under \eqref{EQ:gradControl2D}.

It can be shown that, under \eqref{EQ:gradControl2D}, $p(t)$ approaches $\mathcal{P}_{\mathrm{eq}}$ as $t \to \infty$ from the following argument.
Consider $v(p)$ defined in \eqref{EQ:potentialFnK4inR2}.
By taking the derivative of $v$, we have
\begin{align}
\dot{v} = -\norm*{\frac{\partial v}{\partial p}}^{2} \le 0.
\end{align}
Since $v$ is lower bounded and $\dot{v}$ is non-positive, $v$ converges to a constant.
Then, from Barbalat's lemma \cite[Lemma~8.2]{C:Khalil:Book2002}, we can conclude that $\dot{v}$ converges to 0 as $t \to \infty$, which means that $p$ approaches $\mathcal{P}_{\mathrm{eq}}$ as $t \to \infty$.

From the preceding argument, we can conclude that if $\mathcal{P}_{\mathrm{i}}$ is unstable, then $\mathcal{P}_{\mathrm{d}}$ is almost globally attractive in the sense that, for almost all initial conditions, $p(t)$ approaches $\mathcal{P}_{\mathrm{d}}$ as $t \to \infty$.
Moreover, since it is known that $p(t)$ converges to a point in $\mathcal{P}_{\mathrm{d}}$ if $p(t)$ approaches $\mathcal{P}_{\mathrm{d}}$ \cite{C:Krick:IJC2009}, we can conclude that $p(t)$ almost globally converges to $\mathcal{P}_{\mathrm{d}}$.
However, it is still an open question whether $\mathcal{P}_{\mathrm{i}}$ is unstable for general $\mathcal{K}_{4}$ formations with desired distances corresponding to non-rectangular formations so the almost global convergence of $p(t)$ to $\mathcal{P}_{\mathrm{d}}$ is still questionable.

Instead of trying to show the instability of incorrect equilibrium formations of \eqref{EQ:gradControl2D}, in this paper, we are going to propose a modified control law to guarantee the almost global attractiveness of the desired formations by taking advantage of the properties of $\mathcal{K}_{4}$ formations in $\mathbb{R}^{3}$.

\subsection{$\mathcal{K}_{4}$ formation in $\mathbb{R}^{3}$}
\label{SSec:K4FormationIn3D}
Instead of $\mathcal{K}_{4}$ formations in $\mathbb{R}^{2}$, let us temporarily consider a $\mathcal{K}_{4}$ formation in $\mathbb{R}^{3}$.
Some properties of $\mathcal{K}_{4}$ formations in $\mathbb{R}^{3}$ are going to be used in Section~\ref{SSec:stabilityOfLockedK4Formation}.

Let $P_i = [P_{ix}~~P_{iy}~~P_{iz}]^{\top} \in \mathbb{R}^{3}$.
$P = [P_1^{\top}~~\ldots~~P_4^{\top}]^{\top} \in \mathbb{R}^{12}$.
Analogously to the 2-D $\mathcal{K}_{4}$ formation in Section~\ref{SSec:K4FormationIn2D}, we assume that the agents are governed by
\begin{align}
\dot{P}_i = U_i,\ \forall i \in \mathcal{V},
\end{align}
where $U_i = [U_{ix}~~U_{iy}~~U_{iz}]^{\top} \in \mathbb{R}^{3}$ is the control input for agent~$i$.
Let $D_{ij} \in \mathbb{R}_{>0}$, $\set{i,j} \in \mathcal{E}$, be the distances induced from the desired 3-D $\mathcal{K}_{4}$ formation.
Let $V \colon \mathbb{R}^{12} \to \mathbb{R}_{\ge 0}$ be a potential function defined by
\begin{align}
V(P) = \frac1{4}\sum_{1 \le i < j \le 4}\left(\norm{P_i - P_j}^2 - D_{ij}^2\right)^2.
\end{align}
Then, consider the following systems:
\begin{align}
\label{EQ:EQ:gradControl3D}
\dot{P} = U = -\left[\frac{\partial V}{\partial P}\right]^{\top},
\end{align}
where $U = [U_1^{\top}~~\ldots~~U_4^{\top}]^{\top} \in \mathbb{R}^{12}$.

\begin{lemma}[\cite{C:Park:CDC2014, C:Sun:CDC2015}]
\label{Lemma:negEigValOfHV}
Let $H_{V}(P)$ denote the Hessian matrix of $V(P)$.
For any incorrect equilibrium point $P^*$ of \eqref{EQ:EQ:gradControl3D}, $H_{V}(P^*)$ has a negative eigenvalue.
\end{lemma}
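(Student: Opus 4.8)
The plan is to read the instability straight off the structure of the Hessian. Writing $g_{ij}(P) = \norm{P_i - P_j}^2 - D_{ij}^2$ and collecting the distance errors at $P^*$ into $e^* = (g_{ij}(P^*))_{\set{i,j}\in\mathcal{E}}$, a direct differentiation of $V$ gives
\begin{align}
H_V(P^*) = \tfrac12 R^{\top}R + L(e^*)\otimes I_3,
\end{align}
where $R$ is the rigidity matrix of $(\mathcal{G},P^*)$ (the Jacobian of $P \mapsto (\norm{P_i - P_j}^2)_{\set{i,j}\in\mathcal{E}}$) and $L(e^*)$ is the $4\times 4$ weighted graph Laplacian of $\mathcal{K}_4$ with edge weights $e^*_{ij}$, i.e.\ $[L(e^*)]_{ii} = \sum_{j\ne i}e^*_{ij}$ and $[L(e^*)]_{ij} = -e^*_{ij}$. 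Since $\tfrac12 R^{\top}R \succeq 0$, the entire obstruction to a negative eigenvalue of $H_V(P^*)$ lives in the ``stress'' term $L(e^*)\otimes I_3$; in particular, if $L(e^*)\succeq 0$ then $H_V(P^*)\succeq 0$.

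First I would show that an incorrect equilibrium $P^*$ must be \emph{degenerate}, i.e.\ $P_1^*,\dots,P_4^*$ do not affinely span $\mathbb{R}^3$. The equilibrium condition $\partial V/\partial P = 0$ says exactly that $L(e^*)$ annihilates $\mbf 1$ and each of the three coordinate vectors of $P^*$; were $P^*$ nondegenerate, these four vectors would form a basis of $\mathbb{R}^4$, forcing $L(e^*) = 0$, hence $e^* = 0$ and $V(P^*) = 0$ — contradicting incorrectness. (Equivalently, a nondegenerate $\mathcal{K}_4$ in $\mathbb{R}^3$ carries no nonzero self-stress.) Next, at a degenerate $P^*$, pick a unit vector $a$ orthogonal to the affine hull of the points and test $H_V(P^*)$ on perturbations $\delta P_i = \xi_i a$: every row $P_i^*-P_j^*$ of $R$ is orthogonal to $a$, so the rigidity term vanishes on this subspace and $\delta P^{\top}H_V(P^*)\delta P = \norm{a}^2\,\xi^{\top}L(e^*)\xi$. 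Combined with the Hessian decomposition this shows that $H_V(P^*)$ has a negative eigenvalue \emph{if and only if} $L(e^*)$ does, so the lemma reduces to: if $e^* \ne 0$, then $L(e^*)$ is not positive semidefinite.

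I would handle this by the affine dimension $k\in\set{0,1,2}$ of $P^*$, using the identity (with $J = I_4 - \tfrac14\mbf 1\mbf 1^{\top}$, $s_i = [L(e^*)]_{ii}$, $G^*$ the centered Gram matrix of $P^*$, and $G_D$ that of the target realization)
\begin{align}
G_D = G^* - \tfrac12 L(e^*) + \tfrac12 J\diag(s)J, \qquad G_D\succeq 0, \quad \rank G_D = 3,
\end{align}
the last line being exactly the realizability of $\set{D_{ij}}$ as a genuine tetrahedron. For $k=0$ (coincident points) one has $\trace L(e^*) = 2\sum_{\set{i,j}\in\mathcal{E}}e^*_{ij} = -2\sum_{\set{i,j}\in\mathcal{E}}D_{ij}^2 < 0$, so $L(e^*)$ has a negative eigenvalue. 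For $k=2$ (coplanar, affinely spanning a plane), the equilibrium condition forces $\rank L(e^*)\le 1$ (its kernel already contains $\mbf 1$ and the two independent coordinate vectors spanning the plane), so $L(e^*) = \mu ww^{\top}$ with $w\perp\mbf 1$ the affine-dependence vector; substituting into the identity, applying it to $w$, and using $G^*w = 0$ and $G_D\succeq 0$ gives $0\le w^{\top}G_Dw = \tfrac{\mu}{2}\bigl(\sum_i w_i^4 - (\sum_i w_i^2)^2\bigr)$, and since the parenthesis is strictly negative (as $w$ has at least two nonzero entries) we get $\mu<0$, hence a negative eigenvalue. The same computation covers $k=1$ whenever $\rank L(e^*)\le 1$.

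The residual case — $P^*$ collinear with $\rank L(e^*) = 2$ — is where I expect essentially all the work to lie. Here the out-of-hull perturbations above only see $L(e^*)$ restricted to the $2$-dimensional space $W$ of affine dependences, where it equals $\diag(s)|_W - 2G_D|_W$, a difference of positive forms that ``$G_D\succeq 0$'' alone does not force to be indefinite. I would close the gap by invoking the \emph{full-rank} positive definiteness of $G_D$ on the orthogonal complement of $\mbf 1$ (precisely what separates a genuine tetrahedron from a degenerate target) together with the explicit parametrization of a collinear $\mathcal{K}_4$ and its three-dimensional self-stress space; for four points on a line this becomes a bounded verification over the orderings of the points, showing that no incorrect equilibrium can have $L(e^*)|_W\succeq 0$. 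This collinear sub-case is the main obstacle.
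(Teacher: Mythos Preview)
The paper does not prove this lemma; it is quoted from \cite{C:Park:CDC2014,C:Sun:CDC2015} and used as a black box in the proof of Theorem~\ref{Thm:instabilityOfRedDynamics}. There is therefore no in-paper argument to compare your proposal against.

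That said, your outline is sound and in fact tracks the structure of the cited results. The decomposition $H_V(P^*)=\tfrac12 R^{\top}R+L(e^*)\otimes I_3$, the observation that a nondegenerate $\mathcal{K}_4$ in $\mathbb{R}^3$ admits no nonzero self-stress (so every incorrect equilibrium is degenerate), and the reduction via out-of-affine-hull perturbations to ``$L(e^*)$ has a negative eigenvalue'' are all correct and are essentially the mechanism behind \cite{C:Sun:CDC2015}. Your Gram-matrix identity argument for the coplanar ($k=2$) case is clean, as are the coincident case and the collinear rank-one sub-case.

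The collinear rank-two sub-case is, as you acknowledge, where the work lies, and your sketch there is not yet a proof. ``Bounded verification over orderings'' is a plan, not an argument: you have not exhibited a $w\in W$ with $w^{\top}L(e^*)w<0$, and your description of $L(e^*)|_W$ as ``a difference of positive forms'' is not obviously justified (the diagonal entries $s_i$ can have mixed signs at a collinear equilibrium). To close this you would need to parametrize four collinear positions and the six errors subject to the equilibrium constraints and then produce the negative direction explicitly, using that the target Gram matrix has full rank three on $\mathbf{1}^{\perp}$; for $\mathcal{K}_4$ this is a finite computation but a real one. So the proposal is correct in outline and aligned with the literature the paper cites, but the residual sub-case is a genuine gap in the write-up rather than a routine detail.
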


Note that $H_{V}$ in Lemma~\ref{Lemma:negEigValOfHV} is the Jacobian matrix of the right side of \eqref{EQ:EQ:gradControl3D}.
Hence, the existence of a negative eigenvalue of $H_{V}$ at $P^*$ implies the existence of a positive eigenvalue of the Jacobian matrix at $P^*$, which means that $P^*$ is an unstable equilibrium point of \eqref{EQ:EQ:gradControl3D}.
Also, Lemma~\ref{Lemma:negEigValOfHV} means that $V$ does not have a local minimum at any incorrect equilibrium point of \eqref{EQ:EQ:gradControl3D}.

\section{$\mathcal{K}_{4}$ formation locked on $x$-$y$ plane}
\subsection{Equations of motion and modified control law}
\label{SSec:K4FormationInReduced3D}
In this section, we are considering a $\mathcal{K}_{4}$ formation which exists in a 2-dimensional ambient space in reality, but one of the agents is assumed to be in a \emph{virtual} 3-dimensional space containing the 2-dimensional space where all other three agents live.
For example, consider the formation illustrated in \myfig\ref{Fig:K4FrameworkinR2}.
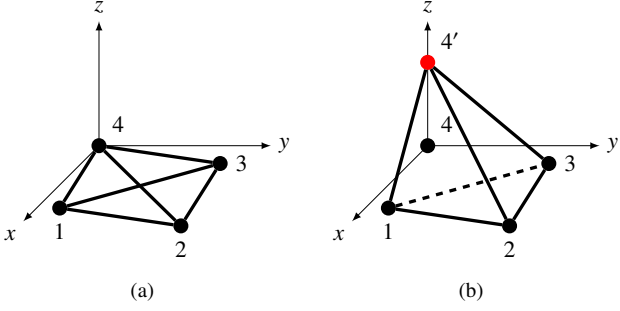
\begin{figure}
\centering
\subfigure[\label{Fig:K4FrameworkinR2}]{
\resizebox{0.225\textwidth}{!}{%
\begin{tikzpicture}[scale=0.7]
\node (py) at (4,0,0) {$y$};
\node (pz) at (0,3,0) {$z$};
\node (px) at (0,0,5) {$x$};

\draw (0,0,0) [frame] -- (px);
\draw (0,0,0) [frame] -- (py);
\draw (0,0,0) [frame] -- (pz);

\node[place] (node1) at (0.5,0,3.5) [label=below:$1$] {};
\node[place] (node2) at (3.5,0,4.5) [label=below:$2$] {};
\node[place] (node3) at (3,0,1) [label=right:$3$] {};
\node[place] (node4) at (0,0,0) [label=above right:$4$] {};

\draw (node1) [line] -- node [left] {} (node2);
\draw (node1) [line] -- node [right] {} (node3);
\draw (node1) [line] -- node [right] {} (node4);
\draw (node2) [line] -- node [below] {} (node3);
\draw (node2) [line] -- node [below] {} (node4);
\draw (node3) [line] -- node [below] {} (node4);
\end{tikzpicture}}
}
\subfigure[\label{Fig:K4FrameworkinR3}]{
\resizebox{0.225\textwidth}{!}{%
\begin{tikzpicture}[scale=0.7]
\node (py) at (4,0,0) {$y$};
\node (pz) at (0,3,0) {$z$};
\node (px) at (0,0,5) {$x$};

\draw (0,0,0) [frame] -- (px);
\draw (0,0,0) [frame] -- (py);
\draw (0,0,0) [frame] -- (pz);

\node[place] (node1) at (0.5,0,3.5) [label=below:$1$] {};
\node[place] (node2) at (3.5,0,4.5) [label=below:$2$] {};
\node[place] (node3) at (3,0,1) [label=right:$3$] {};
\node[place] (node4) at (0,0,0) [label=above right:$4$] {};
\node[placeRed] (node4p) at (0,1.8,0) [label=above right:$4'$] {};

\draw (node1) [line] -- node [left] {} (node2);
\draw (node1) [line3] -- node [right] {} (node3);
\draw (node1) [line] -- node [right] {} (node4p);
\draw (node2) [line] -- node [below] {} (node3);
\draw (node2) [line] -- node [below] {} (node4p);
\draw (node3) [line] -- node [below] {} (node4p);
\end{tikzpicture}}
}
\caption{Typical $\mathcal{K}_{4}$ formations: (a) represents the actual 2-D formation we are handling, and (b) represents the virtual 3-D formation such that the projection onto the $x$-$y$ plane is equal to the formation in (a).}
\label{Fig:K4Framework}
\end{figure}
Four agents move in the $x$-$y$ plane, and the motion of the agents are basically governed by the equations in Section~\ref{Sec:preliminary}.
However, we introduce a virtual variable enabling an agent to be in the virtual 3-dimensional space by treating the virtual variable as if it is the $z$ coordinate of the position vector of the agent.
Hence, in \myfig\ref{Fig:K4FrameworkinR3}, the actual location of agent~4 is the location of the vertex denoted by 4, but we think as if agent~4 is at the location of the vertex denoted by 4' so the formation can be treated as a virtual tetrahedron formation with its base (defined by the plane containing agents 1,2 and 3) locked on the $x$-$y$ plane.
We assume that the virtual variable (the $z$-coordinate of agent 4' and denoted by $p_{4z}$) can be transmitted to the other agents by communication.

We define a new state vector $q \in \mathbb{R}^{9}$ by
\begin{align}
q
&= \begin{bmatrix}p_1^{\top} &p_2^{\top} &p_3^{\top} &p_4^{\top} &p_{4z}\end{bmatrix}^{\top}\\
&= \begin{bmatrix}p_{1x}&p_{1y}&p_{2x}&p_{2y}&p_{3x}&p_{3y}&p_{4x}&p_{4y}&p_{4z}\end{bmatrix}^{\top},
\end{align}
to represent the overall system of the locked tetrahedron formation. 
Although we are considering the virtual tetrahedron formation still the ultimate control objective is to achieve that
\begin{align}
\label{EQ:achieving2Dformation}
\sqrt{(p_{ix} - p_{jx})^2 + (p_{iy} - p_{jy})^2} \to d_{ij},\ \forall \set{i,j} \in \mathcal{E}.
\end{align}
Equivalently, we want to achieve a desired tetrahedron formation of which the projection onto the $x$-$y$ plane is the same as the desired $\mathcal{K}_{4}$ formation mentioned in Section~\ref{SSec:K4FormationIn2D}.

This can be done through setting another desired distance set for the tetrahedron, as follows.
Let $\alpha > 0$ be arbitrary, and define a tetrahedron formation in $\mathbb{R}^{3}$ with distances $D_{ij}$ given by
\begin{align}
D_{ij} &= d_{ij},\ \forall 1\le i<j \le 3,\\
D_{i4} &= \sqrt{d_{i4}^2 + \alpha^2},\ \forall i \in \set{1,2,3}.
\end{align}
Now, consider the following potential function:
\begin{align}
\bar{V}(q)
&= \frac1{4}\sum_{1\le i< j \le 3}\left[(p_{ix}-p_{jx})^{2} + (p_{iy}-p_{jy})^{2} - D_{ij}^{2}\right]^{2}\\
&\phantom{=\,}+ \frac1{4}\sum_{1 \le i \le 3} \left[(p_{ix}-p_{4x})^{2} + (p_{iy}-p_{4y})^{2} + p_{4z}^{2} - D_{i4}^{2}\right]^2.
\end{align}
We can observe that if $\bar{V}(q) \to 0$, then we achieve \eqref{EQ:achieving2Dformation}, i.e., we can achieve the 2-dimensional desired formation shape.

Suppose that $p_{4z}$ is governed by
\begin{align}
\label{EQ:dynamicsOfp4z}
\dot{p}_{4z} = u_{4z},\ p_{4z}(0) \ne 0,
\end{align}
and \eqref{EQ:dynamicsOfp4z} can be updated numerically by agent~4.
The initial condition $p_{4z}(0)$ can be set as an arbitrary nonzero number (preferably as $\alpha$).
Then, we propose the following control law:
\begin{align}
\label{EQ:reducedDynamics}
\dot{q} = \begin{bmatrix}u^{\top} &u_{4z}\end{bmatrix}^{\top} = -\left[\dfrac{\partial \bar{V}}{\partial q}\right]^{\top}.
\end{align}

Let $e_{ij} = (p_{ix}-p_{jx})^{2} + (p_{iy}-p_{jy})^{2} - D_{ij}^{2}$ for all $1 \le i<j \le 3$ and $e_{i4} = (p_{ix}-p_{4x})^{2} + (p_{iy}-p_{4y})^{2} + p_{4z}^{2} - D_{i4}^{2}$ for all $i \in \set{1,2,3}$.
Then, \eqref{EQ:reducedDynamics} can be written in detail as
\begin{align}
\dot{p}_{1} &= (p_{2}-p_{1})e_{12} + (p_{3}-p_{1})e_{13} + (p_{4}-p_{1})e_{14},\\
\dot{p}_{2} &= (p_{1}-p_{2})e_{12} + (p_{3}-p_{2})e_{23} + (p_{4}-p_{2})e_{24},\\
\dot{p}_{3} &= (p_{1}-p_{3})e_{13} + (p_{2}-p_{3})e_{23} + (p_{4}-p_{3})e_{34},\\
\dot{p}_{4} &= (p_{1}-p_{4})e_{14} + (p_{2}-p_{4})e_{24} + (p_{3}-p_{4})e_{34},\\
\dot{p}_{4z} &= (0-p_{4z})e_{14} + (0-p_{4z})e_{24} + (0-p_{4z})e_{34}.
\end{align}
Note that, under \eqref{EQ:reducedDynamics}, the actual locations of the agents on the $x$-$y$ plane are determined by the equations for $\dot{p}_{1},\ldots,\dot{p}_{4}$.
Therefore, the overall virtual formation can be considered as a tetrahedron formation evolving in $\mathbb{R}^{3}$ with three vertices locked on the $x$-$y$ plane.

\begin{remark}
In actual implementation of the control law in \eqref{EQ:reducedDynamics}, observe that each agent can calculate its control input based on the local measurements.
Specifically, $e_{ij}$ is given by
\begin{align}
e_{ij} =
\begin{cases}
\norm{p_i - p_j}^2 - D_{ij}^2, &(1 \le i < j \le 3),\\
\norm{p_i - p_4}^2 + p_{4z}^2 - D_{i4}^2, & (j = 4),
\end{cases}
\end{align}
where the relative position vectors $p_i - p_j$, $1 \le i<j \le 4$, are assumed to be measured by sensors, and $p_{4z}$ is assumed to be transmitted by communication.
Furthermore, for the relative-position measurements, it is \myemph{not} necessary that each agent be equipped with an aligned local reference frame with respect to the $x$-$y$ plane.
In other words, the agents do not have to share a common sense of the north.
\end{remark}

\subsection{Stability analysis}
\label{SSec:stabilityOfLockedK4Formation}
Now we are going to analyze the stability/instability characteristics of different equilibrium formations of \eqref{EQ:reducedDynamics}.
It is not difficult to show that $q(t)$ finally approaches an equilibrium point of \eqref{EQ:reducedDynamics}.
Thus, whether $q(t)$ approaches an incorrect equilibrium point depends on the characteristics of the incorrect equilibria.

Before we go further, we clarify the meaning of congruence.
In a slight abuse of terminology, we say $P$ and $q$ defined in Sections \ref{SSec:K4FormationIn3D} and \ref{SSec:K4FormationInReduced3D}, respectively, are congruent if we have that $\norm{P_i - P_j} = \norm{p_i - p_j}$ for all $1 \le i<j \le 3$ and that $\norm{P_i - P_4} = \sqrt{\norm{p_i - p_4}^2 + p_{4z}^2}$ for all $i \in \set{1,2,3}$.
Since $q$ defines a tetrahedron of which the face defined by vertices 1, 2, and 3 is attached to the $x$-$y$ plane, such an extension is reasonable.


\begin{lemma}
\label{Lemma:notLocalMinimizerIffCondition}
Consider two realizations $P \in \mathbb{R}^{12}$ and $q \in \mathbb{R}^{9}$ which are congruent.
Assume that $P$ and $q$ are critical points of $V$ and $\bar{V}$, respectively.
Then, $V$ does not have a local minimum at $P$ if and only if $\bar{V}$ does not have a local minimum at $q$.
\end{lemma}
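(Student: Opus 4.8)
The plan is to use the fact that $\bar V$ is literally the restriction of $V$ to a $9$-dimensional coordinate subspace of $\mathbb{R}^{12}$, combined with the invariance of $V$ under the isometry group of $\mathbb{R}^{3}$. Introduce the linear embedding $\iota\colon\mathbb{R}^{9}\to\mathbb{R}^{12}$ carrying $q=(p_{1},p_{2},p_{3},p_{4},p_{4z})$ to the realization $P$ with $P_{i}=(p_{ix},p_{iy},0)$ for $i\in\{1,2,3\}$ and $P_{4}=(p_{4x},p_{4y},p_{4z})$; substituting into the two potentials gives $V\circ\iota=\bar V$. Since $P$ and $q$ are congruent in the sense defined just before the lemma, $P$ and $\iota(q)$ have the same six pairwise distances, so there is a fixed $g_{0}\in O(3)\ltimes\mathbb{R}^{3}$ with $g_{0}P=\iota(q)$; as $V\circ g=V$ for every isometry $g$ and this does not affect whether $V$ has a local minimum, we may assume from the start that $P=\iota(q)$, and then $V(P)=\bar V(q)$.

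The core step is to show that, in a neighbourhood of $P=\iota(q)$, the function $V$ factors through $\bar V$. Assume the triangle on vertices $1,2,3$ of $P$ is nondegenerate. Parametrizing isometries near the identity by an infinitesimal rotation and a translation, the three scalar equations ``the $z$-coordinates of the first three vertices of $P'$ vanish'' have Jacobian of full rank $3$ at the identity precisely because $p_{1},p_{2},p_{3}$ are affinely independent. By the implicit function theorem there is a smooth map $P'\mapsto h(P')\in O(3)\ltimes\mathbb{R}^{3}$, defined for $P'$ near $P$, with $h(P)=\mathrm{id}$ and $h(P')P'=\iota\!\left(r(P')\right)$ for a smooth map $r$ with $r(P)=q$ (and $r\circ\iota=\mathrm{id}$ near $q$). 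Invariance of $V$ gives $V(P')=V\!\left(h(P')P'\right)=\bar V\!\left(r(P')\right)$ for all $P'$ near $P$. From this the claimed equivalence is immediate: if $\bar V$ has a local minimum at $q$, then by continuity of $r$ we get $V(P')=\bar V(r(P'))\ge\bar V(q)=V(P)$ for $P'$ near $P$, so $V$ has a local minimum at $P$; conversely, if $V$ has a local minimum at $P=\iota(q)$, then $\bar V(q')=V(\iota(q'))\ge V(\iota(q))=\bar V(q)$ for $q'$ near $q$, so $\bar V$ has a local minimum at $q$. Negating both sides yields the ``does not have a local minimum'' form. (The hypothesis that $P$ and $q$ are critical points is not used in this argument; it only identifies the situation of interest.)

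The main obstacle is the nondegeneracy used to construct $h$. If $P$ is a flat tetrahedron (four coplanar vertices) the base triangle can be degenerate, and, viewed invariantly, the isometry orbit of $P$ — a $6$-dimensional manifold transverse to $\iota(\mathbb{R}^{9})$ in the nondegenerate case — drops dimension, so the factorization $V=\bar V\circ r$ fails. In the situation where this lemma is applied the difficulty does not occur, since the relevant equilibria of \eqref{EQ:reducedDynamics} have $p_{4z}\neq0$ (because $\dot p_{4z}=-p_{4z}\,(e_{14}+e_{24}+e_{34})$ together with $p_{4z}(0)\neq0$ keeps $p_{4z}$ from vanishing) and therefore correspond to nonflat tetrahedra; for the fully general statement the flat case would require a separate, more delicate argument working directly with the second-order structure of $V$ and $\bar V$ (and, where the Hessian is only semidefinite, with the rigidity-theoretic stress structure) along the degenerate directions.
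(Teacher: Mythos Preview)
Your argument and the paper's rest on the same idea---values of $V$ and $\bar V$ coincide on congruent realizations, so a nearby decrease of one transfers to a nearby decrease of the other---but you carry it out more carefully. The paper simply asserts that for $P'$ close to $P$ with $V(P')<V(P)$ one can choose a congruent $q'$ close to $q$; you justify precisely this step by building, via the implicit function theorem, a smooth local retraction $r$ with $V=\bar V\circ r$ near $P=\iota(q)$, and you correctly observe that the critical-point hypothesis is never used. One caution on your obstacle discussion: the nondegeneracy required for your IFT step is affine independence of $p_{1},p_{2},p_{3}$, and this is \emph{not} implied by $p_{4z}\neq 0$---a non-flat tetrahedron can still have a collinear base, and it is collinearity (not mere flatness) that makes the isometry orbit drop dimension and the retraction $r$ fail. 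Your dynamical remark about $p_{4z}$ therefore does not fully dispose of the degenerate case you identified. The paper's own proof glosses over exactly this point; both arguments, as written, cover the generic situation and leave the collinear-base case to be treated separately.
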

\begin{proof}
Suppose that $V$ does not have a local minimum at $P$.
Then, it must be true that, for any $\delta > 0$, there exists $P' \in \set{X \in \mathbb{R}^{12} \given \norm{P - X} < \delta}$ such that $V(P') < V(P)$.
Consider arbitrarily small $\bar{\delta} > 0$.
Then there always exists $q' \in \set{x \in \mathbb{R}^{9} \given \norm{q - x} < \bar{\delta}}$ such that $\bar{V}(q') < \bar{V}(q)$ because we can choose $q'$ so that $q'$ and $P'$ are congruent and that $V(P') < V(P)$ with arbitrarily small $\delta > 0$.
Consequently, $\bar{V}$ does not have a local minimum at $q$ if $V$ does not have a local minimum at $P$.

Conversely, suppose that $\bar{V}$ does not have a local minimum at $q$.
Then, for any $\bar{\delta} > 0$, there exists $q' \in \set{x \in \mathbb{R}^{9} \given \norm{q - x} < \bar{\delta}}$ such that $\bar{V}(q') < \bar{V}(q)$.
Now, for arbitrarily small $\delta > 0$, we can always find $P' \in \set{X \in \mathbb{R}^{12} \given \norm{P - X} < \delta}$ such that $V(P') < V(P)$ from the fact that we can take $P'$ so that $P'$ and $q'$ are congruent and that $\bar{V}(q') < \bar{V}(q)$.
Thus, we can conclude that $V$ does not have a local minimum at $P$ if $\bar{V}$ does not have a local minimum at $q$.

Those two arguments above complete the proof.
\end{proof}

In addition to Lemma~\ref{Lemma:notLocalMinimizerIffCondition}, we can more precisely say that $P$ is a maximizer (minimizer, saddle point, respectively) of $V$ if and only if $q$ is a maximizer (minimizer, saddle point, respectively) of $\bar{V}$ provided that $P$ and $q$ are congruent.
Also such a characteristic implies that $P$ is a critical point of $V$ if and only if $q$ is a critical point of $\bar{V}$.

Recall that, in Lemma~\ref{Lemma:notLocalMinimizerIffCondition}, $q$ can be viewed as a tetrahedral formation obtained by locking $P$ on the $x$-$y$ plane.
Thus, the relationship between $V$ and $\bar{V}$ in this paper is similar to the relationship between $V_{N}$ and $V$ in \cite{C:Anderson:SIAMJCO2014}.
In the reference paper, $N$-agent system evolving in one-dimensional space is considered, and $V$ in \cite{C:Anderson:SIAMJCO2014} is the reduced potential function obtained from the nominal potential function $V_{N}$ by assuming that an agent is fixed at the origin, where $V_{N}$ in \cite{C:Anderson:SIAMJCO2014} corresponds to the one-dimensional version of $V$ in this paper.

\begin{theorem}
\label{Thm:instabilityOfRedDynamics}
Any incorrect equilibrium point of \eqref{EQ:reducedDynamics} is unstable.
\end{theorem}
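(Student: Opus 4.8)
The plan is to recognize \eqref{EQ:reducedDynamics} as the three-dimensional tetrahedron gradient flow \eqref{EQ:EQ:gradControl3D} restricted to the invariant subspace on which agents $1,2,3$ are locked to the plane $z=0$, and then to import the instability already known for \eqref{EQ:EQ:gradControl3D} from Lemma~\ref{Lemma:negEigValOfHV} via Lemma~\ref{Lemma:notLocalMinimizerIffCondition}. As a preliminary, I would record that along \eqref{EQ:reducedDynamics} one has $\dot{\bar V}=-\|\partial\bar V/\partial q\|^{2}\le 0$ with $\bar V$ bounded below, that the planar centroid $p_1+p_2+p_3+p_4$ is conserved (the edge terms cancel pairwise), and that boundedness of $\bar V$ keeps the distances $\|p_i-p_j\|$ and $|p_{4z}|$ bounded; since $\bar V$ is a polynomial, the {\L}ojasiewicz gradient inequality then gives convergence of every trajectory to a single equilibrium point. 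Hence it suffices to prove that every incorrect equilibrium $q^{*}$ of \eqref{EQ:reducedDynamics} is an unstable equilibrium.

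Given $q^{*}=(p_1,p_2,p_3,p_4,p_{4z})$, I would build the congruent realization $P^{*}\in\mathbb{R}^{12}$ with $P^{*}_i=(p_i,0)$ for $i\in\{1,2,3\}$ and $P^{*}_4=(p_4,p_{4z})$. Direct substitution gives $V(P^{*})=\bar V(q^{*})\neq 0$, so $P^{*}$ is not a correct equilibrium. The point needing a check is that $P^{*}$ is an equilibrium of \eqref{EQ:EQ:gradControl3D} at all: the $x$-$y$ part of $\partial V/\partial P_i$ at $P^{*}$ equals $-\dot p_i$ and the $z$-part of $\partial V/\partial P_4$ equals $-\dot p_{4z}$, all vanishing at $q^{*}$, while the $z$-part of $\partial V/\partial P_i$ for $i\in\{1,2,3\}$ equals $-p_{4z}\,e_{i4}$. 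It therefore remains to show $e_{14}=e_{24}=e_{34}=0$ at $q^{*}$: reading $\dot p_4=0$ and $\dot p_{4z}=0$ as a homogeneous linear system in $(e_{14},e_{24},e_{34})$ whose coefficient matrix is nonsingular whenever $p_1,p_2,p_3$ are affinely independent settles this, and the low-dimensional degenerate configurations (in particular, all four planar positions collinear) are dispatched directly from the remaining equilibrium equations. Thus $P^{*}$ is an incorrect equilibrium of \eqref{EQ:EQ:gradControl3D}.

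By Lemma~\ref{Lemma:negEigValOfHV}, $H_V(P^{*})$ has a negative eigenvalue, so $V$ has no local minimum at $P^{*}$; since $P^{*}$ and $q^{*}$ are congruent critical points, Lemma~\ref{Lemma:notLocalMinimizerIffCondition} (and the sharpening in the following remark) gives that $\bar V$ has no local minimum at $q^{*}$, i.e.\ $q^{*}$ is a saddle point or a maximizer of $\bar V$. It then remains to convert this into Lyapunov instability under \eqref{EQ:reducedDynamics}. I would argue by contradiction: if $q^{*}$ were stable, then by the first step every trajectory launched from a small neighborhood of $q^{*}$ converges to an equilibrium inside a prescribed neighborhood; but, $\bar V$ having no local minimum at $q^{*}$, one may launch from a point with $\bar V<\bar V(q^{*})$, and its limit equilibrium must then carry an $\bar V$-value strictly below $\bar V(q^{*})$ — impossible once the neighborhood is small enough that the only equilibria near $q^{*}$ are the congruent copies of $q^{*}$, all at the value $\bar V(q^{*})$, which uses that the incorrect $\mathcal{K}_4$ equilibrium shapes are isolated modulo congruence and bounded away from the correct set.

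I expect this last step to be the main obstacle: extracting genuine instability — and, ideally, the fact that the stable set of $q^{*}$ is a lower-dimensional, measure-zero manifold, which is what the almost global convergence claim ultimately rests on — from the purely second-order statement that $\bar V$ is not locally minimized at $q^{*}$. The cleanest way to close the gap is to verify that the negative-curvature direction of $H_V(P^{*})$ provided by Lemma~\ref{Lemma:negEigValOfHV} can be taken tangent to the locked subspace $\{P_{1z}=P_{2z}=P_{3z}=0\}$; then the Hessian $H_{\bar V}(q^{*})$ of $\bar V$, being exactly the principal submatrix of $H_V(P^{*})$ on those coordinates, inherits a negative eigenvalue, so the Jacobian of \eqref{EQ:reducedDynamics} at $q^{*}$ has a positive eigenvalue and $q^{*}$ is unstable with a well-defined unstable manifold by the linearization (center--stable manifold) theorem. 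Establishing this compatibility of the destabilizing direction with the locking constraint is the technical heart of the argument.
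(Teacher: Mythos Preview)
Your route is exactly the paper's: lift $q^{*}$ to a congruent $P^{*}$, invoke Lemma~\ref{Lemma:negEigValOfHV} to deny a local minimum of $V$ at $P^{*}$, transfer this via Lemma~\ref{Lemma:notLocalMinimizerIffCondition} to $\bar V$ at $q^{*}$, and conclude instability from the linearization of \eqref{EQ:reducedDynamics}. Where you differ from the paper is in rigor, not strategy. The paper does not verify that the embedded $P^{*}$ is a critical point of $V$; it simply cites the remark after Lemma~\ref{Lemma:notLocalMinimizerIffCondition}, whereas you correctly observe that the three extra gradient components $\partial V/\partial P_{iz}=-p_{4z}e_{i4}$ must be checked and supply the argument. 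More importantly, the obstacle you isolate in your final paragraph---that ``$\bar V$ has no local minimum at $q^{*}$'' does not by itself yield a negative eigenvalue of $H_{\bar V}(q^{*})$---is real, and the paper does not resolve it: the paper's proof literally asserts that ``showing that the Hessian matrix has a negative eigenvalue at the incorrect equilibrium point is equal to showing that $\bar V$ does not have a local minimum at the same point,'' which is the nontrivial direction. Your proposed fix is the right one and goes through cleanly: since $P^{*}$ is a critical point, the tangent $T$ to the $E(3)$-orbit through $P^{*}$ lies in $\ker H_{V}(P^{*})$; because the locked subspace $S=\{P_{1z}=P_{2z}=P_{3z}=0\}$ satisfies $S+T=\mathbb{R}^{12}$ (the three ``missing'' directions are recovered by the $z$-translation and the two out-of-plane infinitesimal rotations), any $v$ with $v^{\top}H_{V}(P^{*})v<0$ can be written $v=s+t$ with $s\in S$, $t\in T$, and then $s^{\top}H_{\bar V}(q^{*})s=s^{\top}H_{V}(P^{*})s=v^{\top}H_{V}(P^{*})v<0$. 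This gives the positive eigenvalue of the Jacobian of \eqref{EQ:reducedDynamics} directly, without the Lyapunov/isolation detour you sketch as an alternative.
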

\begin{proof}
To show the instability of an arbitrary incorrect equilibrium point of \eqref{EQ:reducedDynamics}, we can take an approach showing that linearization of \eqref{EQ:reducedDynamics} at the incorrect equilibrium point has a positive eigenvalue.
Meanwhile, the linearization is equal to the negative Hessian matrix of $\bar{V}$.
Thus, showing that the Hessian matrix has a negative eigenvalue at the incorrect equilibrium point is equal to showing that $\bar{V}$ does not have a local minimum at the same point.
We know that, for an incorrect equilibrium point $q^*$ of \eqref{EQ:reducedDynamics}, we can find an incorrect equilibrium point $P^*$ of \eqref{EQ:EQ:gradControl3D} such that $q^*$ and $P^*$ are congruent.
Furthermore, we know that $\bar{V}$ does not have a local minimum at $q^*$ if and only if $V$ does not have a local minimum at $P^*$ from Lemma~\ref{Lemma:notLocalMinimizerIffCondition}.
Note that $H_V(P^*)$ has a negative eigenvalue from Lemma~\ref{Lemma:negEigValOfHV}, which means that $V$ does not have a local minimum at $P^*$.
Therefore, we can conclude that $q^*$ is unstable.
\end{proof}

On the basis of Theorem~\ref{Thm:instabilityOfRedDynamics}, we can addresses the following proposition.

\begin{theorem}
Under the proposed control law in \eqref{EQ:reducedDynamics}, the agents achieve the desired 2-D $\mathcal{K}_{4}$ formation on the $x$-$y$ plane for almost all initial conditions.
\end{theorem}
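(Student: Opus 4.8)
The plan is to derive the theorem from Theorem~\ref{Thm:instabilityOfRedDynamics} by the two-part argument that is standard for gradient flows: show that every solution of \eqref{EQ:reducedDynamics} converges to a single equilibrium point, and show that the set of initial conditions whose solution converges to an \emph{incorrect} equilibrium is Lebesgue-null; then almost every solution converges to a zero of $\bar V$, at which --- as will be checked below --- the target distances \eqref{EQ:achieving2Dformation} hold.

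First I would establish boundedness of $q(t)$. Along \eqref{EQ:reducedDynamics}, $\dot{\bar V}=-\norm{\partial\bar V/\partial q}^{2}\le 0$, so $\bar V(q(t))$ is non-increasing. The sublevel set $\set{q \given \bar V(q)\le\bar V(q(0))}$ is not compact, because $\bar V$ is invariant under a common translation of $p_{1},\dots,p_{4}$; however the sum $p_{1}+p_{2}+p_{3}+p_{4}$ is a first integral of \eqref{EQ:reducedDynamics} (the terms $(p_{j}-p_{i})e_{ij}$ cancel pairwise in $\sum_{i=1}^{4}\dot p_{i}$), and on any affine slice of fixed centroid $\bar V$ is coercive; hence the intersection of that sublevel set with the slice is compact and positively invariant, so $q(t)$ is bounded. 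Since $\bar V$ is a polynomial --- in particular real-analytic --- the {\L}ojasiewicz gradient inequality now implies that the bounded trajectory $q(t)$ converges to a single equilibrium point $q^{\infty}$ (the gradient-flow analogue of the convergence statement invoked for \eqref{EQ:gradControl2D} in \cite{C:Krick:IJC2009}).

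The equilibrium set of \eqref{EQ:reducedDynamics} is the disjoint union of the correct part $\set{q \given \bar V(q)=0}$ and the incorrect part $\set{q \given \partial\bar V/\partial q = 0,\ \bar V(q)\ne 0}$. Suppose $q^{\infty}$ lies in the correct part. Then $e_{ij}=0$ for $1\le i<j\le 3$ makes the base triangle $\set{1,2,3}$ congruent to that of the desired formation, and because for every $\alpha>0$ the six lengths $d_{12},d_{13},d_{23},D_{14},D_{24},D_{34}$ are those of a genuine, nondegenerate tetrahedron (base of positive area, apex at height $\alpha$) and are therefore \emph{not} realizable in $\mathbb R^{2}$, no zero of $\bar V$ can have $p_{4z}=0$. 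The remaining equations $e_{i4}=0$, $i\in\set{1,2,3}$, say that the point $(p_{4x},p_{4y},p_{4z})\in\mathbb R^{3}$ lies at distance $D_{i4}$ from each of the three non-collinear base points, which determines it --- up to a planar isometry and the reflection $p_{4z}\mapsto -p_{4z}$ --- as the apex of the desired tetrahedron; hence $p_{4z}^{\infty}=\pm\alpha$ and $\sqrt{(p_{ix}-p_{4x})^{2}+(p_{iy}-p_{4y})^{2}}=d_{i4}$, so \eqref{EQ:achieving2Dformation} holds. It thus remains to show that the basin of attraction of the incorrect part is a null set.

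This last step is the one I expect to take the most care. At any incorrect equilibrium $q^{*}$ the linearization of \eqref{EQ:reducedDynamics} is $-H_{\bar V}(q^{*})$, and Theorem~\ref{Thm:instabilityOfRedDynamics} --- which transfers, through Lemma~\ref{Lemma:notLocalMinimizerIffCondition}, the negative-eigenvalue property of Lemma~\ref{Lemma:negEigValOfHV} for the free tetrahedron \eqref{EQ:EQ:gradControl3D} to the locked one --- shows that $H_{\bar V}(q^{*})$ has a negative eigenvalue, i.e. that $-H_{\bar V}(q^{*})$ has a positive eigenvalue. The center-stable manifold theorem then furnishes a neighborhood $U_{q^{*}}$ of $q^{*}$ and a $C^{1}$ submanifold $W^{cs}(q^{*})\subseteq U_{q^{*}}$ of dimension at most $8$ that contains every point whose forward orbit stays in $U_{q^{*}}$; being of codimension at least one, $W^{cs}(q^{*})$ is Lebesgue-null. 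A trajectory converging to $q^{*}$ eventually stays in $U_{q^{*}}$, so its initial condition lies in $\bigcup_{t\in\mathbb Q_{\ge 0}}\phi_{-t}\!\bigl(W^{cs}(q^{*})\bigr)$, a countable union of images of a null set under the diffeomorphisms $\phi_{-t}$, hence null. Covering the incorrect equilibrium set by the neighborhoods $U_{q^{*}}$ and extracting a countable subcover (second countability of $\mathbb R^{9}$), one gets that the set of initial conditions whose trajectory converges to \emph{some} incorrect equilibrium is a countable union of null sets, hence null; this completes the argument. (One also notes that $\set{q \given p_{4z}=0}$ is invariant for \eqref{EQ:reducedDynamics}, since $\dot p_{4z}=0$ there, which is why $p_{4z}(0)\ne 0$ is imposed in \eqref{EQ:dynamicsOfp4z}; and any limit point lying on $\set{q \given p_{4z}=0}$ is an incorrect equilibrium, hence is already covered above.)
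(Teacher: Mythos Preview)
The paper gives no proof of this theorem: it simply states that the result follows ``on the basis of Theorem~\ref{Thm:instabilityOfRedDynamics}''. The only argument the paper supplies is the sketch in Section~\ref{SSec:K4FormationIn2D} for the two-dimensional system \eqref{EQ:gradControl2D}: non-increase of the potential, Barbalat's lemma for convergence to the equilibrium set, the bare assertion that instability of every incorrect equilibrium implies almost-global attractivity of the correct set, and a citation of \cite{C:Krick:IJC2009} for convergence to a point once the correct set is approached.

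Your proposal is correct and is precisely a completion of that sketch, not a different route. You supply the pieces the paper leaves implicit: boundedness via centroid conservation together with coercivity of $\bar V$ on the fixed-centroid slice; convergence to a single equilibrium via the {\L}ojasiewicz inequality (legitimate since $\bar V$ is polynomial), which cleanly replaces the Barbalat-plus-\cite{C:Krick:IJC2009} combination; the center-stable manifold plus second-countability argument that turns ``each incorrect equilibrium is unstable'' (Theorem~\ref{Thm:instabilityOfRedDynamics}) into ``the basin of the incorrect set is Lebesgue-null''; and the verification that any zero of $\bar V$ actually yields \eqref{EQ:achieving2Dformation}, including the observation that the tetrahedron with side lengths $D_{ij}$ has positive volume so that necessarily $p_{4z}^{\infty}=\pm\alpha$. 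These are exactly the details a referee would ask for, and nothing in your argument conflicts with the paper.
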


\section{Simulation}
\label{Sec:simulation}
Consider a set of desired inter-agent distances defining a $\mathcal{K}_{4}$ formation in $\mathbb{R}^{2}$ as follows.
\begin{align}
d_{12}^{2} = 16,\ d_{13}^{2} = 25,\ d_{14}^{2} = 10,\ d_{23}^{2} = 17,\ d_{24}^{2} = 18,\ d_{34}^{2} = 5.\label{EQ:ExDistanceConstraints2D}
\end{align}
A representative realization satisfying the distance constraints in \eqref{EQ:ExDistanceConstraints2D} is given by $\bar{p} = [0~~0~~4~~0~~3~~4~~1~~3]^{\top}$.
Let $\alpha = 1$.
Then, the corresponding distance constraints for a virtual tetrahedron are given by
\begin{align}
D_{12}^{2} &= 16,\ D_{13}^{2} = 25,\ D_{14}^{2} = 11,\\
D_{23}^{2} &= 17,\ D_{24}^{2} = 19,\ D_{34}^{2} = 6.\label{EQ:ExDistanceConstraints3D}
\end{align}

Then, a simulation for \eqref{EQ:reducedDynamics} results in \myfig\ref{Fig:Formation3D}.
\begin{figure}[!t]
\centering
\subfigure[]{
\includegraphics[width=0.48\textwidth]{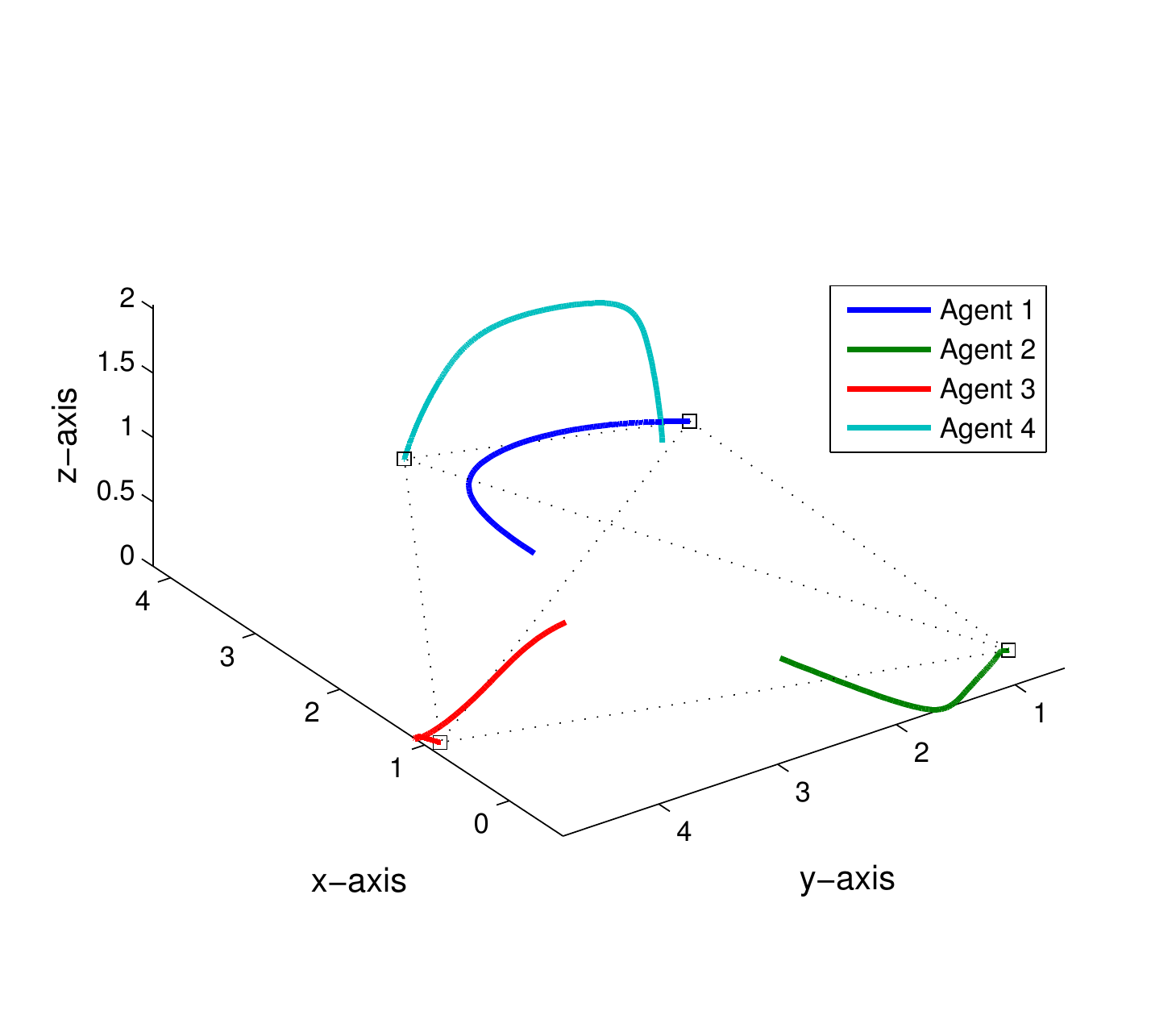}
}
\subfigure[]{
\includegraphics[width=0.48\textwidth]{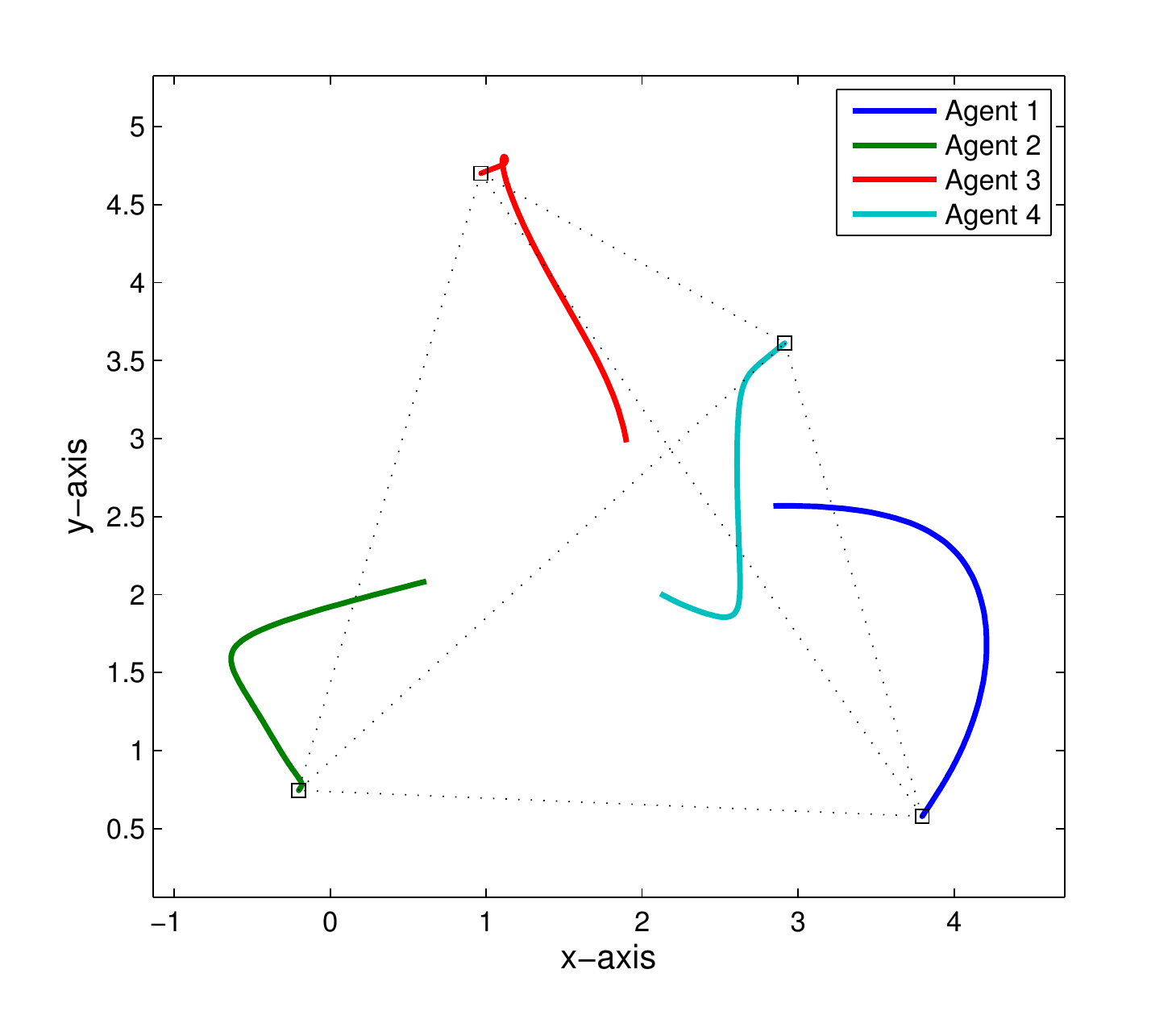}
}
\subfigure[]{
\includegraphics[width=0.48\textwidth]{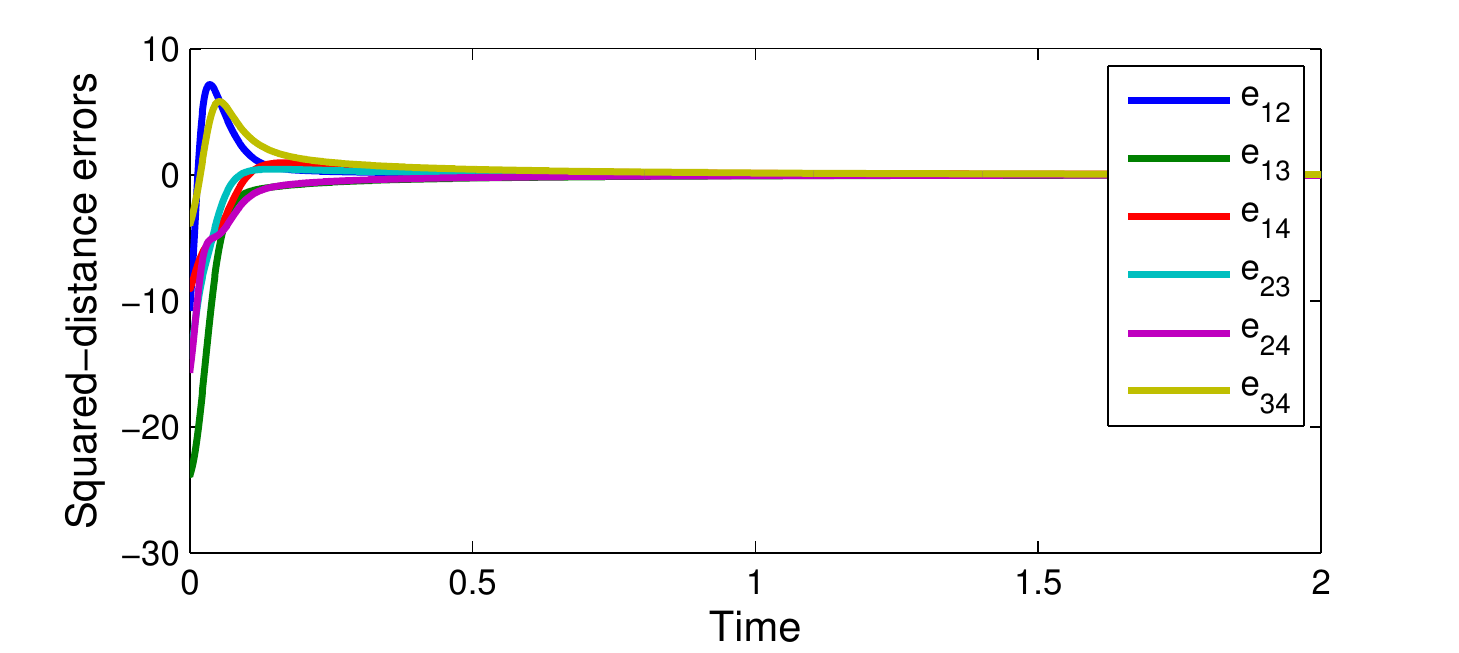}
}
\caption{$\mathcal{K}_{4}$ formation with virtual coordinate.}
\label{Fig:Formation3D}
\end{figure}
Agents 1 through 3 are randomly located on the $x$-$y$ plane at the initial time, and $p_{4z}(0)$ is set to be $\alpha$.
We can observe that the agents finally achieve the desired formation shape, and the distance errors converge to 0 as time goes on.

\section{Conclusion}
\label{Sec:Conclusion}
In this paper, we proposed a control law to achieve the desired $\mathcal{K}_{4}$ formation in $\mathbb{R}^{2}$.
The proposed control law guarantees almost global convergence in the sense that the formation converges to the desired formation shape for almost all initial conditions.
The proposed control law is one modified from Krick's control law in \cite{C:Krick:IJC2009} with a virtual coordinate.
With the virtual coordinate, the $\mathcal{K}_{4}$ formation in $\mathbb{R}^{2}$ is treated as a virtual tetrahedron formation locked on the plane.
Then, we take advantage of the properties of $\mathcal{K}_{4}$ formations in $\mathbb{R}^{3}$ that can be found in \cite{C:Park:CDC2014, C:Sun:CDC2015} to prove the instability of any incorrect equilibrium formation of the locked tetrahedron formation.

We expect that the proposed variation technique can be extended to general cases of more than four agents.
For example, a formation of five agents might be treated as a virtual 4-D formation.
Depending on the actual realization space, the number of necessary virtual coordinates might be different.
Thus, more rigorous verification on the relationships between actual formations and corresponding virtual formations should be considered.

\section*{Acknowledgment}
This work was supported in part by NICTA Ltd and the Australian Research Council (ARC) under grant DP-160104500.
Z Sun was supported by a Prime Minister's Australia Asia Incoming Endeavour Postgraduate Award.

\bibliographystyle{IEEEtran}
\bibliography{mybib}

\begin{thebibliography}{10}
\providecommand{\url}[1]{#1}
\csname url@rmstyle\endcsname
\providecommand{\newblock}{\relax}
\providecommand{\bibinfo}[2]{#2}
\providecommand\BIBentrySTDinterwordspacing{\spaceskip=0pt\relax}
\providecommand\BIBentryALTinterwordstretchfactor{4}
\providecommand\BIBentryALTinterwordspacing{\spaceskip=\fontdimen2\font plus
\BIBentryALTinterwordstretchfactor\fontdimen3\font minus
  \fontdimen4\font\relax}
\providecommand\BIBforeignlanguage[2]{{%
\expandafter\ifx\csname l@#1\endcsname\relax
\typeout{** WARNING: IEEEtran.bst: No hyphenation pattern has been}%
\typeout{** loaded for the language `#1'. Using the pattern for}%
\typeout{** the default language instead.}%
\else
\language=\csname l@#1\endcsname
\fi
#2}}

\bibitem{C:Ren:CSM2007}
W.~Ren, R.~W. Beard, and E.~M. Atkins, ``Information consensus in multivehicle
  cooperative control,'' \emph{IEEE Control Systems Magazine}, vol.~27, no.~2,
  pp. 71--82, 2007.

\bibitem{C:OlfatiSaber:IEEE2007}
R.~\mbox{Olfati-Saber}, J.~A. Fax, and R.~M. Murray, ``Consensus and
  cooperation in networked multi-agent systems,'' \emph{Proceedings of the
  IEEE}, vol.~95, no.~1, pp. 215--233, 2007.

\bibitem{C:OlfatiSaber:IFAC2002}
R.~\mbox{Olfati-Saber} and R.~M. Murray, ``Distributed cooperative control of
  multiple vehicle formations using structural potential functions,'' in
  \emph{Proceedings of the 15th IFAC World Congress}, July 2002, pp. 495--500.

\bibitem{C:Krick:IJC2009}
L.~Krick, M.~E. Broucke, and B.~A. Francis, ``Stabilisation of infinitesimally
  rigid formations of multi-robot networks,'' \emph{International Journal of
  Control}, vol.~82, no.~3, pp. 423--439, 2009.

\bibitem{C:Yu:SIAMJCO2009}
C.~Yu, B.~D.~O. Anderson, S.~Dasgupta, and B.~Fidan, ``Control of minimally
  persistent formations in the plane,'' \emph{SIAM Journal on Control and
  Optimization}, vol.~48, no.~1, pp. 206--233, 2009.

\bibitem{C:Oh:Automatica2011}
K.-K. Oh and H.-S. Ahn, ``Formation control of mobile agents based on
  inter-agent distance dynamics,'' \emph{Automatica}, vol.~47, no.~10, pp.
  2306--2312, 2011.

\bibitem{C:Summers:TAC2011}
T.~H. Summers, C.~Yu, S.~Dasgupta, and B.~D.~O. Anderson, ``Control of
  minimally persistent leader-remote-follower and coleader formations in the
  plane,'' \emph{IEEE Transactions on Automatic Control}, vol.~56, no.~12, pp.
  2778--2792, 2011.

\bibitem{C:Anderson:IFAC2010}
B.~D.~O. Anderson, C.~Yu, S.~Dasgupta, and T.~H. Summers, ``Controlling four
  agent formations,'' in \emph{Proceedings of the 2nd IFAC Workshop on
  Distributed Estimation and Control in Networked Systems}, Sept. 2010, pp.
  139--144.

\bibitem{C:Dasgupta:AUCC2011}
S.~Dasgupta, B.~D.~O. Anderson, C.~Yu, and T.~H. Summers, ``Controlling
  rectangular formations,'' in \emph{Proceedings of the 2011 Australian Control
  Conference}, Nov. 2011, pp. 44--49.

\bibitem{C:Park:CDC2014}
M.-C. Park, Z.~Sun, B.~D.~O. Anderson, and H.-S. Ahn, ``Stability analysis on
  four agent tetrahedral formations,'' in \emph{Proceedings of the 53rd IEEE
  Conference on Decision and Control}, Dec. 2014, pp. 631--636.

\bibitem{C:Sun:CDC2015}
Z.~Sun, U.~Helmke, and B.~D.~O. Anderson, ``Rigid formation shape control in
  general dimensions: an invariance principle and open problems,'' in
  \emph{Proceedings of the 54th IEEE Conference on Decision and Control}, Dec.
  2015, pp. 6095--6100.

\bibitem{C:Oh:IJRNC2014}
K.-K. Oh and H.-S. Ahn, ``Distance-based undirected formations of
  single-integrator and double-integrator modeled agents in $n$-dimensional
  space,'' \emph{International Journal of Robust and Nonlinear Control},
  vol.~24, no.~12, pp. 1809--1820, 2014.

\bibitem{C:Summers:MSC2013}
T.~H. Summers, C.~Yu, S.~Dasgupta, and B.~D.~O. Anderson, ``Certifying
  non-existence of undesired locally stable equilibria in formation shape
  control problems,'' in \emph{Proceedings of the 2013 IEEE International
  Symposium on Intelligent Control}, 2013, pp. 200--205.

\bibitem{C:Wirth:EM2009}
K.~Wirth and A.~S. Dreiding, ``Edge lengths determining tetrahedrons,''
  \emph{Elemente der Mathematik}, vol.~64, no.~4, pp. 160--170, 2009.

\bibitem{C:Khalil:Book2002}
H.~K. Khalil, \emph{Nonlinear Systems}, 3rd~ed.\hskip 1em plus 0.5em minus
  0.4em\relax NJ: Prentice Hall, 2002.

\bibitem{C:Anderson:SIAMJCO2014}
B.~D.~O. Anderson and U.~Helmke, ``Counting critical formations on a line,''
  \emph{SIAM Journal on Control and Optimization}, vol.~52, no.~1, pp.
  219--242, 2014.

\end{thebibliography}

\end{document}